\def\@doi#1{\href{https://doi.org/#1}
		{\ttfamily https://doi.org/#1}\egroup}}
\def\@doi#1{\ttfamily https://doi.org/#1\egroup}}
	\def\doi{\bgroup\catcode`\_=12\relax\@doi}}
\footnotesize\printfield{doi}}
\definecolor{darkblue}{rgb}{0, 0, 0.7}
\crefname{line}{\text{line}}{\text{lines}} 
\tikzset{
>=stealth',
help lines/.style={dashed, thick},
axis/.style={<->},
important line/.style={thick},
connection/.style={thick, dotted},
}
	\newcommand{\todoinline}[1]{\mbox{}{\color{red}{\textbf{TODO}\ifx#1\\\else:\ \fi #1}}} 
	\newcommand{\todoinline}[1]{}
	\newcommand{\reviewDelete}[1]{{\color{red}\st{#1}}}
	\newcommand{\reviewDelete}[1]{}
\theoremstyle{plain}
\newtheorem{proposition}{Proposition}
\newtheorem{theorem}{Theorem}
\theoremstyle{definition}
\newtheorem{definition}{Definition}
\newtheorem{example}{Example}
\theoremstyle{remark}
\newtheorem{remark}{Remark}
 	\definecolor{colorok}{RGB}{80,80,150}
	\definecolor{colorok}{RGB}{0,0,0}
\title{\LARGE \bf 
Proving the Convergence to Limit Cycles using Periodically Decreasing Jacobian 
Matrix Measures}
\author{%
Jawher Jerray$^{1}$
\thanks{$^{1}$Jawher Jerray is with 
LTCI, T\'el\'ecom Paris, Institut Polytechnique de Paris, Sophia-Antipolis, France 
{\tt\small jawher.jerray@telecom-paris.fr}}
\and
Laurent Fribourg$^{2}$
\thanks{$^{2}$Laurent Fribourg is with University Paris-Saclay, CNRS, ENS Paris-Saclay, LMF, F-91190 Gif-Sur-Yvette, France
        {\tt\small fribourg@lsv.fr}}
}
\begin{document}

\maketitle              
\begin{abstract}
Methods based on ``(Jacobian) matrix measure'' 
to show the convergence of a dynamical system to a limit cycle
(LC), generally assume that the measure is negative everywhere
on the LC.
We relax this assumption by assuming that the matrix measure is
negative ``on average'' over 
one period of LC. 
Using an approximate Euler trajectory, we thus present a method
that guarantees the LC existence, and allows us to construct a basin 
of attraction.
This is illustrated on the example of the
Van der Pol system.
\end{abstract}

\section{Introduction}\label{sec:intro}
Consider the nonlinear dynamical system defined by
\begin{equation}
    \label{eqn:dyn}
    \dot{x}(t)=f(x(t)).
\end{equation}
with $f:\mathbb{R}^n\rightarrow\mathbb{R}^n$ and
$x(0)=x_0\in\mathbb{R}^n$ as initial condition.
%
Let $\xi_{x_0}(t)$ denote the solution of \eqref{eqn:dyn} at time $t$
with initial condition $\xi_{x_0}(0)=x_0$.

For a long time, methods 
based on the notion of
(Jacobian) matrix measure (noted as $\mu_P(\cdot)$) have been used to show the convergence of a solution $\xi_{x_0}(t)$ to a stationary point
(see e.g. \cite{Soderlind06,AminzareS14,Giesl2022}). 
These methods essentially assume that a bound $c$
is known for
the measure of the (transverse) Jacobian matrix $J(x)$ with respect
to $x$ 
for any
$x$ belonging to a $\Omega$ forward invariant space.
Formally,
there exists $c\in\mathbb{R}$ such that, for all $x\in\Omega$:
\begin{equation}\label{eq:contraction}
\mu_P(J(x))\leq c.
\end{equation}
%
Under these conditions, we have the following information 
about the distance between two solutions
$\xi_{x_0}(t)$ and $\xi_{y_0}(t)$ 
starting from two different initial conditions $x_0$ and $y_0$:
\begin{equation*} 
\mbox{if } |y_0-x_0|\leq \varepsilon_0 \mbox{ then } |\xi_{y_0}(t) - \xi_{x_0}(t)| \leq \varepsilon_0e^{ct},
\end{equation*}
for all $t\geq 0$ (see, e.g., 
\cite{MaidensA15,FanM15}).

Similarly, 
methods
based on the transverse component $\mu_P^\bot(\cdot)$ of $\mu_p(\cdot)$
are used to show
the convergence of $\xi_{x_0}(t)$ towards a limit cycle~$\Gamma$
(see e.g. \cite{ManchesterCDC13}).
Equation \eqref{eq:contraction} becomes, for all $x\in\Omega$:
\begin{equation}\label{eq:contraction'}
\mu_P^\bot(J(x))\leq c.
\end{equation}

If $c<0$, the system is said to be {\em contractive}:
The solutions $\xi_{y_0}(t)$ and $\xi_{x_0}(t)$  converge
asymptotically to each other.
Within the framework of periodic systems,
the transverse contractivity
leads to the existence and uniqueness of a limit cycle 
$\Gamma$ inside $\Omega$.
(see, e.g., \cite{ManchesterCDC13}).
%
These methods of proof of convergence consist essentially in 
finding a positive definite matrix $P$ (possibly depending on $x$)
ensuring \eqref{eq:contraction'} with $c<0$ for any point $x$. The
discovery of such a matrix can be done by solving  
a convex optimization problem
(via a linear matrix inequality 
or polynomial sum of squares, see \cite{ManchesterCDC13}), 
but such a problem of optimization
does not always have a solution.

We propose here to relax the criterion  \eqref{eq:contraction'} by allowing that
$\mu_P^\bot(J(x))$ can be locally  $\geq 0$
on $\Gamma$,
provided that $\mu_P^\bot(J(x))$ is negative on average
on $\Gamma$, i.e.:
\begin{equation}\label{eq:periodic}
\int_0^T\mu_P^\bot(J(\xi_{x^*}(t)))dt\leq c <0,
\end{equation}
where $T$ is the period of $\Gamma$, and $x^*$ a point of $\Gamma$.

We give a set-based criterion, based on the use of
an Euler trajectory $\tilde{x}(t)$ of initial condition $\tilde{x}_0$.
More precisely, we give an upperbound $\delta(t)$ of the error~$e_{y_0}(t)=|\tilde{x}(t)-\xi_{y_0}(t)|$ where the initial point $y_0$ 
of the exact solution $\xi_{y_0}(t)$ is located in the vicinity of $\tilde{x}_0(t)$.
By showing that $\delta(t)$ decreases at each round of the Euler trajectory,
we prove the contraction ``in average'' of the system, 
thus highlighting the presence of 
a limit cycle $\Gamma$ in the vicinity of $\tilde{x}(t)$
(see Theorem~\ref{lemma:new1}).  This also allows us 
to construct an invariant zone ${\cal C}$ around~$\Gamma$.

Thanks to an additional numeric criterion, which is a discrete version of~\eqref{eq:periodic} (see~\eqref{eq:newrelax}, Section~\ref{sec:attraction})
we then
determine a basin of attraction of $\Gamma$. See Theorem~\ref{th:key1}.
%
%
The method is illustrated on the example of the Van der Pol system.

Note that, although our method can be defined using a norm 
$|x|_P:=\sqrt{x^\top P x}$ 
for a symmetric positive definite
matrix $P$, we restrict ourselves in the following to
the case where $P$ is the identity matrix $I$, 
the norm $|\cdot|_I$ (just denoted $|\cdot|$) is
the Euclidean norm, and 
$\mu_I(\cdot)$ (just denoted $\mu(\cdot)$) is
the associated matrix measure
(see Section~\ref{sec:preliminaries}).

In summary, our contribution is to
\begin{itemize}
\item give a sufficient set-based condition (see \eqref{eq:new})
that guarantees the existence of a limit cycle $\Gamma$, 
and allows us to construct an invariant zone ${\cal C}$ around $\Gamma$,  
\item give an additional numeric condition \eqref{eq:newrelax}
(which is a discrete version of \eqref{eq:periodic})
that allows us to
determine a basin of attraction of $\Gamma$, 
\item illustrate these points on the 
Van der Pol example.
\end{itemize}
\subsection*{Plan of the paper}
After some preliminaries (Section~\ref{sec:preliminaries}), we give 
a criterion that guarantees the  existence of a limit cycle $\Gamma$
(Section~\ref{sec:approx}).
We then give an additional condition 
that allows us to
determine a basin of attraction of $\Gamma$
(Section~\ref{sec:attraction}).
We conclude in Section~\ref{sec:conclusion}.
\section{Preliminaries}\label{sec:preliminaries}
We denote by $\mathbb{R}$ and $\mathbb{N}$ the set of real and natural numbers, respectively. These symbols are annotated with subscripts to restrict them in the usual way, e.g., $\mathbb{R}_{\geq 0}$ denotes the non-negative real numbers.
We denote by $\mathbb{R}^n$ a $n$-dimensional Euclidean space. 
For a matrix $A$, we denote by $A^\top$ the transpose of $A$. 
The Euclidean norm is denoted by $|\cdot|$. 
The ball of center $x\in\mathbb{R}^n$ and radius $\delta\in\mathbb{R}_{\geq 0}$
is denoted by $B(x,\delta)$ (i.e.,  $B(x,\delta)=\{y\in\mathbb{R}^n: |x-y|\leq \delta\}$).
The distance $d(x,\Gamma)$ of $x\in\mathbb{R}^n$ to 
$\Gamma\subset \mathbb{R}^n$ is defined as $\inf\{|x-y|: y\in\Gamma\}$.
A set of successive integers of the form $\{1,2,\dots,k\}$ is 
abbreviated as $[k]$. The scalar product of $x\in\mathbb{R}^n$ and $y\in\mathbb{R}^n$ is written $\langle x, y\rangle$.
Let
$J(x)$ be the Jacobian matrix
of the vector field $f(x)$.
Among the eigenvectors of matrix $\frac{J(x)+J^\top(x)}{2}$,
let $v_0(x)$ (resp. ${\cal E}_0(x)$) be the tangent eigenvector 
(resp. associated eigenvalue) led by $f(x)$,
and $v_1(x),\dots, v_{n-1}(x)$ 
(resp. ${\cal E}_1(x),\dots, {\cal E}_n(x)$) be the other eigenvectors
(resp. associated eigenvalues).
Let
\begin{equation}\label{eq:mu00}
    \mu(J(x))=\max_{i=0,1,\dots,n-1}{\cal E}_i(x).
\end{equation}
Given a global bound $\lambda$ on $\mu(J(x))$, we
know that all trajectories of \eqref{eqn:dyn} with initial conditions in $B(x_0,\delta_0)$ lie in
$B(\xi_{x_0}(t),\delta_0 e^{\lambda t})$ 
(see, e.g., 
\cite{MaidensA15,FanM15}).
If $\lambda < 0$ then the system~\eqref{eqn:dyn} is said
to be {\em contracting} (cf. \cite{LohmillerS98}). 
Let
\begin{equation}\label{eq:mubot}
    \mu^\bot(J(x))=\max_{i=1,\dots,n-1}{\cal E}_i(x).
\end{equation}
Note that, while $i=0$ belongs to the index domain of \eqref{eq:mu00}, 
the index $i=0$ is discarded in \eqref{eq:mubot} because it corresponds to
the tangent direction.

Let $L$ denote the Lipschitz constant of vector field $f$,
and $M_f$ an upperbound on magnitude of $f$
(i.e:
$M_f\geq |f(z)|$ for all $z\in\Omega$). 

We denote by $\tilde{x}_i$ the Euler discretization of 
\eqref{eqn:dyn}
at time $t_i=ih$, for $i\in\mathbb{N}$, where $h$ is the time-step size.
Given an initial point $\tilde{x}_0\in\mathbb{R}^n$, 
$\tilde{x}_{i+1}$ is defined, for $i\in\mathbb{N}$, by: 
\begin{equation}
    \label{eqn:Euler}
    \tilde{x}_{i+1}=\tilde{x}_{i}+hf(\tilde{x}_{i}).
\end{equation}
For $s\in[0,h]$, $t=ih+s$, let $\tilde{x}_i(s) =\tilde{x}_i+sf(\tilde{x}_i)$
(so $\tilde{x}_i(0)=\tilde{x}_i$, $\tilde{x}_i(h)=\tilde{x}_{i+1}$),
and $\tilde{x}(t)=\tilde{x}_i(s)$.
%
%
%
%

\section{Proof of existence of a limit cycle}\label{sec:approx}
We now give a sufficient condition to guarantee the existence of a
limit cycle $\Gamma$ solution of \eqref{eqn:dyn},
and
construct a forward invariant zone around $\Gamma$. 

Consider an Euler trajectory $\tilde{\Gamma}$,
i.e. a solution
$\tilde{x}(t)$ of \eqref{eqn:Euler} with time-step $h\in\mathbb{R}_{>0}$
and initial condition $\tilde{x}_0\in\mathbb{R}^n$.
Let $S_0$ be the hyperplan through $\tilde{x}_0$ 
orthogonal to $f(\tilde{x}_0)$.
Let $\tilde{x}_i =\tilde{x}(ih)$, for $i\in\mathbb{N}$.
We denote by $S_i$ the plan passing through $\tilde{x}_i$ and orthogonal
to $f(\tilde{x}_i)$.
Likewise, for  $s\in[0,h)$, $S_i(s)$ is the plan   
through $\tilde{x}_i(s)$ orthogonal to $f(\tilde{x}_i(s))$.
%
%

We suppose that $\tilde{x}(t)$ returns
for the first time (in the good direction)
to $S_0$ at time $t=\tilde{R}_1\in((N_1-1)h,N_1h]$ 
for some~$N_1\in\mathbb{N}_{>0}$. 
%
%
More generally, 
we suppose that $\tilde{x}(t)$ returns
for the $p$-th time ($p\in\mathbb{N}_{>0}$)
to $S_0$ at time $t=\tilde{R}_p\in((N_p-1)h,N_ph]$ 
for some~$N_p\in\mathbb{N}_{>0}$, so we have: 
$$\tilde{x}(\tilde{R}_p)\in S_0 \mbox{ and }
f(\tilde{x}_0)^\top  f(\tilde{x}(\tilde{R}_p))>0.$$
By convention, let $\tilde{R}_0=0$.
Note that
$S_0$ is ``between'' $S_{N_p-1}$ and $S_{N_p}$ for all $p\in\mathbb{N}_{>0}$
(see Figure~\ref{fig:v_cycle}, Section~\ref{sec:approx}).

Given $\delta_0\in\mathbb{R}_{>0}$ and $y_0\in B(\tilde{x}_0,\delta_0)\cap S_0$,
we suppose that 
$\xi_{y_0}(t)$ returns to $S_0$ for the 1st time
at $t=T_{y_0}(1)\in\mathbb{R}_{>0}$.
So we have:
\begin{equation*} 
\xi_{y_0}(T_{y_0}(1))\in S_0 \mbox{ and }
f(\tilde{x}_0)^\top  f(\xi_{y_0}(T_{y_0}(1)))>0.
\end{equation*}
We will suppose that there exists $\eta\in\mathbb{R}_{>0}$
such that:
\begin{equation}\label{eq:eta}
T_{y_0}(1)\geq \eta>0 \mbox{ for all $y_0\in B(\tilde{x}_0,\delta_0)\cap S_0$}.
\end{equation}
This will be used in Theorem~\ref{lemma:new1} 
to show $T_{y_0}(p)\rightarrow\infty$
as $p\rightarrow\infty$, where $T_{y_0}(p)$ is the $p$-th return time
of $\xi_{y_0}(t)$ to $S_0$.

Along the lines of \cite{Giesl2022},
we now synchronize the time between $\tilde{x}_i(t)$ 
and the corresponding solution $\xi_{y_i}(t)$ 
of \eqref{eqn:dyn} with $y_i\in B(\tilde{x}_i,\delta_i)\cap S_i$. 
More precisely, we consider the time-reparametrisation
$ \theta_{y_i}(\cdot)$ defined so that
$\xi_{y_i}(\theta_{y_i}(t))$ belongs to $S_i(t)$.  
Hence, given two adjacent points $\tilde{x}_i\in S_i$ 
and $y_i\in S_i$ with $(\tilde{x}_i-y_i)$ perpendicular
to $f(\tilde{x}_i)$, we define $\theta_{y_i}(t)$
 for all $s\in[0,h)$ in an ``implicit'' manner as follows: 
\begin{equation}\label{eq:thetai}
(\xi_{y_i}(\theta_{y_i}(s))-\tilde{x}_i(s))^\top f(\tilde{x}_i(s))=0.
\end{equation}
%
This is possible due to the implicit function theorem if $h$ and
$|y_i-x_i|$  are sufficiently small (see~\cite{Giesl2022}, Section 2.1).
Using \eqref{eq:thetai}, we define 
$\theta_{y_i}(s)$ with $\theta_{y_i}(0)=y_0$,
and $y_{i+1}=\xi_{y_i}(\theta_i(h))$.
Note that $y_i\in S_i$ for all $i \in\mathbb{N}$.
We can now define $\Theta_{y_0}(\cdot)$ so that
$\xi_{y_0}(\Theta_{y_0}(t))$ is synchronized with
the solution~$\tilde{x}(t)$ of \eqref{eqn:Euler} (with
initial condition $\tilde{x}_0$). Formally,
given $y_0\in \mathbb{R}^n$,  
%
$\Theta_{y_0}(0) := y_0$ and $\Theta_{y_0}(t) := \theta_{y_i}(s)$
for $t=ih+s$ with  $s\in(0,h]$.
%
Note $\xi_{y_0}(\Theta_{y_0}(ih))=y_{i}\in S_{i}$
for all $i\in\mathbb{N}$.
We will ensure $\Theta_{y_0}(t)\rightarrow\infty$
as $t\rightarrow\infty$ thanks to 
assumption \eqref{eq:eta} 
coupled with assumption \eqref{eq:new} (see Theorem~\ref{lemma:new1}).


We now define $a_{i+1}(y_0)$ and $b_{i+1}(y_0)$ ($i\in\mathbb{N}$), or for the sake of
notation simplicity, just $a_{i+1}$ and $b_{i+1}$, as follows: 
\begin{definition}\label{def:ab}
For $i\in\mathbb{N}$, let:
\begin{equation*} 
0<a_{i+1}\leq \inf_{s\in[0,h]}\dot{\theta}_{y_i}(s),\ \ \ \ \ \ \  b_{i+1}\geq \sup_{s\in[0,h]}\dot{\theta}_{y_i}(s).
\end{equation*}
\end{definition}
\begin{remark}
The existence of $a_{i+1},b_{i+1}$ for all $i\in\mathbb{N}$ follows from
the assumptions that 
the 1st return times of $\tilde{x}_0(t)$
and $\xi_{y_0}(t)$ to $S_0$ 
are {\em finite}
(i.e., $\tilde{R}_1\in\mathbb{R}_{>0}$
and $T_{y_0}(1)\in\mathbb{R}_{>0}$
for all $y_0\in B(\tilde{x}_0,\delta_0)\cap S_0$)
together
with conditions~\eqref{eq:eta}-\eqref{eq:new} (see Theorem~\ref{lemma:new1}).
The bounds $a_i$ and $b_i$ can be computed using the implicit function theorem
by time derivation of \eqref{eq:thetai}
(see \cite{Giesl2022}, Section 2.1).
The assumption that $a_{i+1}>0$  (i.e., 
$\inf_{s\in[0,h]}\dot{\theta}_{y_i}(s)>0$) is true 
when $h$ is sufficiently small and
a lowerbound $m>0$ exists on 
the magnitude of vector field~$f$ 
(i.e., $|f(x)|>m$ for all $x\in\Omega$). 
\end{remark}
\noindent
We now define $\alpha_i$, $Y_i$  and $Z_{i+1}$
along the lines of \cite{MaidensA15} (Algorithm~1), as follows:
\begin{definition}\label{def:Zi} 
\begin{itemize}
\item Let

$\alpha_0=\delta_0$,  

$\alpha_{i+1}=\alpha_{i}+b_{i+1} M_f h$ for $i\in\mathbb{N}$,\\
More generally let

$\alpha_{i}(s)=\alpha_{i}+b_{i+1} M_f s$ for $i\in\mathbb{N}, s\in[0,h]$.
\item Let
$Y_i=B(\tilde{x}_i,\alpha_i)\cap S_i$, for all $i\in\mathbb{N}$.\\
More generally let:

$Y_i(s)= B(\tilde{x}_{i}(s),\alpha_{i}(s))\cap S_{i}(s)$
for all $i\in\mathbb{N}, s\in[0,h]$.
%
\item Let $Z_{i+1}=\bigcup_{s\in[0,h]}Y_i(s)$ for all $i\in\mathbb{N}$.
\end{itemize}
\end{definition}
\begin{remark}
Note that $Y_0=B(\tilde{x}_0,\delta_0)\cap S_0$
and $Y_{i+1}=Z_{i+1}\cap S_{i+1}$.
Note also that, given $y_0\in Y_0$, we have:
$\xi_{y_i}(\theta_{y_i}(s))\in Z_{i+1}$
for all 
$i\in\mathbb{N}, s\in[0,h]$. We have also:
$\xi_{y_0}(\Theta_{y_0}(t))\in Z_{i+1}$ for all $t=ih+s$ with $s\in[0,h]$.
This implies that $Z_{i+1}$ 
is an overapproximation of~the ``reachability'' set 
$\{z\in \mathbb{R}^n|\ z=\xi_{y_0}(\Theta_{y_0}(ih+s))\ \mbox{ for some $y_0\in Y_0$ and $s\in[0,h]$}\}.$
\end{remark}
\noindent
Let us now define $\Lambda_{i+1}\in\mathbb{R}$  ($i\in\mathbb{N}$), as follows:
\begin{equation*} 
\Lambda_{i+1}\geq \sup_{z\in Z_{i+1}}\mu^\bot[J(z)],
\end{equation*}
\noindent
Note that~$\Lambda_{i+1}$ satisfies for all $z_1,z_2\in Z_{i+1}$ (see e.g.
\cite{Soderlind06}):
\begin{equation}\label{eq:mu0}
\langle f(z_1)-f(z_2),z_1-z_2\rangle\leq \Lambda_{i+1}|z_1-z_2|^2.
\end{equation}
%
%
Let $\gamma>0$ be a positive real
(arbitrarily chosen).
\begin{definition}
For all $i\in\mathbb{N}$, let

\hspace*{5mm}$\sigma_{i+1}=\frac{1}{2}a_{i+1} 
\Lambda_{i+1}$ \hspace*{\fill}if $\Lambda_{i+1}<-\gamma$,

\hspace*{5mm}$\sigma_{i+1}=\frac{3}{2}b_{i+1} 
\max(|\Lambda_{i+1}|,\gamma)$ \hspace*{\fill}if $\Lambda_{i+1}\geq -\gamma$.
%
\end{definition}
\begin{remark}\label{rk:inf5}
Note that $a_i, b_i, Z_i, \Lambda_i, \sigma_i$ are {\em not} defined for $i=0$.
The constant $\sigma_{i+1}$ is a conservative approximation
of the matrix measure $\Lambda_{i+1}$ on $Z_{i+1}$.
The 
constant $\gamma$ is used 
to isolate the problematic neighborhood of
$\mu^\bot[J(z)]$ around 0,
(and get a {\em positive} lowerbound
of $|\sigma_{i+1}|$):
If $z$ is such that $|\mu^\bot[J(z)]|< \gamma$,   
then $|\mu^\bot[J(z)]|$ is replaced by~$\gamma$. 
It follows from the definition that 
for all $i\in\mathbb{N}$:
\begin{equation}\label{eq:minsigma}
|\sigma_{i+1}|\geq \frac{\gamma}{2}a_{i+1}>0
\end{equation}
\end{remark}
Given $\delta_0$, 
let us define $\delta_i$, $\delta(t)$ and $\tilde{e}_{y_i},\tilde{e}_{y_0}(t)$
as follows:
\begin{definition}\label{eq:deltatilde}
Let $\tilde{e}_{y_{0}}=|\tilde{x}_{0}-y_{0}|$, and for $i\in\mathbb{N}$ let:

$\delta_{i+1}=\delta_ie^{\sigma_{i+1} h}$, \hspace*{\fill}$\tilde{e}_{y_{i+1}}=|\tilde{x}_{i+1}-y_{i+1}|$.\\
More generally for $i\in\mathbb{N}, s\in[0,h], t=ih+s$, let:
 
$\delta_i(s) =\delta_ie^{\sigma_{i+1} s}$ 
\hspace*{\fill}$\tilde{e}_{y_{i}}(s)=|\tilde{x}_{i}(s)-y_{i}(\theta_{y_i}(s))|$,

$\delta(t)=\delta_i(s)$, \hspace*{\fill}$\tilde{e}_{y_0}(t)=|\tilde{x}(t)-\xi_{y_0}(\Theta_{y_0}(t))|$.
\end{definition}
\begin{remark}\label{rk:delta}
%
For $i\in\mathbb{N}$ and $s\in[0,h]$, we have:

$\delta(ih+s)=\delta_{i}(s)=\delta_0 e^{h\Sigma_{k=1}^{i}\sigma_{k}+s\sigma_{i+1}}$\\
if we adopt the convention
$\Sigma_{k=1}^i\sigma_k=0$ for $i=0$.
\end{remark}
\noindent
We now show that, under certain condition, $\tilde{e}_{y_0}(t)\leq \delta(t)$.
\begin{proposition}\label{prop:basicn}
Let $i\in\mathbb{N}, y_i\in Y_i$.\\
If $\tilde{e}_{y_i}\leq\delta_i$, we have:
\begin{equation}\label{eq:key2}
\tilde{e}_{y_i}(s)\leq\max(\delta_i(s),
h(\tilde{M}_{i+1}(\frac{2L}{\gamma a_{i+1}}+1)+b_{i+1}M_f))
\end{equation} 
for all $s\in[0,h]$,
where $\tilde{M}_{i+1}$ an upperbound of $|f(\tilde{x}_i(s))|$
on $s\in[0,h]$.

Furthermore, for all $i\in\mathbb{N}_{>0},s\in[0,h],y_0\in Y_0$:
\begin{equation}\label{eq:max0}
\tilde{e}_{y_{i-1}}(s)\leq \max(\delta_{i-1}(s),h(\tilde{M}_i(\frac{2L}{\gamma a_i}+1)+b_iM_f)).
\end{equation}
\end{proposition}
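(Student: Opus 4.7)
The strategy is to derive and solve a scalar differential inequality for the norm of the error vector $e(s) := \tilde{x}_i(s) - \xi_{y_i}(\theta_{y_i}(s))$ on the interval $s \in [0,h]$, starting from the initial condition $|e(0)| = \tilde{e}_{y_i} \leq \delta_i$. Using that the Euler segment is affine, $\tfrac{d}{ds}\tilde{x}_i(s) = f(\tilde{x}_i)$, and that the reparametrized exact flow satisfies $\tfrac{d}{ds}\xi_{y_i}(\theta_{y_i}(s)) = \dot{\theta}_{y_i}(s)\,f(\xi_{y_i}(\theta_{y_i}(s)))$, I would first obtain
\[
\dot{e}(s) \;=\; f(\tilde{x}_i) - \dot{\theta}_{y_i}(s)\, f(\xi_{y_i}(\theta_{y_i}(s))).
\]

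Next, I would split this into three physically meaningful pieces by adding and subtracting $f(\tilde{x}_i(s))$ and $f(\xi_{y_i}(\theta_{y_i}(s)))$, namely the Euler discretization error $f(\tilde{x}_i) - f(\tilde{x}_i(s))$, the transverse--contraction term $f(\tilde{x}_i(s)) - f(\xi_{y_i}(\theta_{y_i}(s)))$, and the reparametrization defect $(1-\dot{\theta}_{y_i}(s))\,f(\xi_{y_i}(\theta_{y_i}(s)))$. Taking the inner product with $e(s)$ and using $\tfrac{1}{2}\tfrac{d}{ds}|e(s)|^2 = \langle e(s),\dot{e}(s)\rangle$, I would bound the three pieces separately: the first by $Ls\tilde{M}_{i+1}|e(s)|$ via Lipschitz continuity of $f$ together with $|\tilde{x}_i(s)-\tilde{x}_i| \le s\tilde{M}_{i+1}$; the second by $\Lambda_{i+1}|e(s)|^2$ via the measure inequality \eqref{eq:mu0}, invoking the orthogonality $e(s)\perp f(\tilde{x}_i(s))$ built into \eqref{eq:thetai} to justify using the transverse measure $\mu^\bot$ rather than the full $\mu$; and the third by a term of order $\max(1-a_{i+1},b_{i+1}-1)\,M_f\,|e(s)|$ using the envelope in Definition~\ref{def:ab} together with $|f|\le M_f$.

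Dividing by $|e(s)|$ gives a scalar inequality of the form $\dot{|e(s)|} \le \sigma_{i+1} |e(s)| + F_{i+1}$, where $F_{i+1}$ bundles the discretization and reparametrization forcings and can be majorized by $\tilde{M}_{i+1}(\tfrac{2L}{\gamma a_{i+1}}+1) + b_{i+1}M_f$ after using the lower bound $|\sigma_{i+1}| \ge \tfrac{\gamma a_{i+1}}{2}$ from \eqref{eq:minsigma} to convert the ``$Ls\tilde{M}_{i+1}$'' term into a $\tfrac{1}{\gamma a_{i+1}}$--weighted contribution. Applying Gronwall's comparison lemma then produces $|e(s)| \le \delta_i e^{\sigma_{i+1} s} + h F_{i+1} = \delta_i(s) + h F_{i+1}$ on $[0,h]$; weakening the sum into a $\max$ yields \eqref{eq:key2}. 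The inequality \eqref{eq:max0} then follows by a short induction on $i$: the base case $i=1$ is immediate since $y_0\in Y_0$ gives $\tilde{e}_{y_0}\le\alpha_0 = \delta_0$ so \eqref{eq:key2} applies; the inductive step propagates either the contractive branch (in which $\tilde{e}_{y_i}\le\delta_i$ survives and \eqref{eq:key2} applies again) or the forcing branch (in which the $h(\cdots)$ bound is already the desired conclusion).

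The main obstacle I expect is the case split on the sign of $\Lambda_{i+1}$ encoded in the definition of $\sigma_{i+1}$: in the contracting case $\Lambda_{i+1}<-\gamma$ one must pair the measure bound with the lower factor $a_{i+1}$ from the reparametrization (so that $\sigma_{i+1}=\tfrac{1}{2}a_{i+1}\Lambda_{i+1}$ arises after combining the second and third pieces), whereas in the case $\Lambda_{i+1}\ge-\gamma$ one must use $b_{i+1}$ and the regularized magnitude $\max(|\Lambda_{i+1}|,\gamma)$ so that $\sigma_{i+1}=\tfrac{3}{2}b_{i+1}\max(|\Lambda_{i+1}|,\gamma)$ emerges; the factors $\tfrac12$ and $\tfrac32$ account for the cross terms from the reparametrization defect. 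A secondary difficulty is the rigorous transition from the infinitesimal inequality $e(s)^\top J(\tilde{x}_i(s)) e(s)\le \mu^\bot(J(\tilde{x}_i(s)))|e(s)|^2$ (which relies on $e(s)\perp f(\tilde{x}_i(s))$) to the finite chord inequality \eqref{eq:mu0}, which formally only holds for the full measure; this is typically handled by absorbing the non-orthogonal component into the forcing term $F_{i+1}$ via the Lipschitz bound.
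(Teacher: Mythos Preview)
Your overall setup (differentiating $|e(s)|^2$, splitting $\dot e$, using the measure bound \eqref{eq:mu0} and orthogonality from \eqref{eq:thetai}) is in the right spirit, but there is a genuine gap at the step where you ``weaken the sum into a $\max$.'' If Gronwall gives you $|e(s)|\le \delta_i(s)+hF_{i+1}$, that is \emph{weaker} than $\max(\delta_i(s),hF_{i+1})$, not stronger: a sum dominates a max, so you cannot pass from the sum to the max. The $\max$ in \eqref{eq:key2} is not obtained by coarsening a Gronwall bound; it comes from a dichotomy that your plan is missing.

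The paper's argument runs as follows. After the splitting (which, incidentally, uses orthogonality $f(\tilde{x}_i(s))\perp e(s)$ to kill the reparametrization term $\dot\theta\, f(\tilde x_t)$ completely, leaving only $\langle -f(\tilde x_i),e\rangle$), one arrives at
\[
\tfrac{1}{2}\tfrac{d}{ds}|e(s)|^2 \;\le\; \rho_{i+1}\Lambda_{i+1}\,|e(s)|^2 \;+\; L|f(\tilde{x}_i)|\,s\,|e(s)|,
\]
with $\rho_{i+1}\in\{a_{i+1},b_{i+1}\}$ according to the sign of $\Lambda_{i+1}$. Now the dichotomy: \emph{either} $|e(s)|\ge \tfrac{2L|f(\tilde x_i)|}{\gamma a_{i+1}}\,h$ for all $s\in[0,h]$, in which case the forcing term $L|f(\tilde x_i)|\,s\,|e(s)|\le \tfrac{\gamma a_{i+1}}{2}|e(s)|^2$ is absorbed into the quadratic term, producing exactly $\tfrac{1}{2}\tfrac{d}{ds}|e|^2\le\sigma_{i+1}|e|^2$ (this is where the factors $\tfrac12$ and $\tfrac32$ in $\sigma_{i+1}$ come from, not from reparametrization cross terms as you suggest), hence $|e(s)|\le\delta_i(s)$; \emph{or} there exists $s_0$ with $|e(s_0)|<Ch$ where $C=\tfrac{2L|f(\tilde x_i)|}{\gamma a_{i+1}}$, and then a crude triangle inequality over $[0,h]$ gives $|e(s)|\le Ch + h\tilde M_{i+1}+hb_{i+1}M_f$ for every $s$. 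The two cases together yield the $\max$.

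Your three-term split with a separate ``reparametrization defect'' $(1-\dot\theta)f(\xi_{y_i})$ does not exploit orthogonality as sharply (in the paper's split the $\dot\theta$ appears only as a multiplier on $\Lambda_{i+1}$, never as an additive forcing), and in any case the additive $F_{i+1}$ you would get from a straight Gronwall argument cannot be upgraded to the stated $\max$ bound. To repair the argument you need the case split above.
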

\begin{proof}
The proof of \eqref{eq:key2} is 
an adaptation of the proof of Theorem~1 of \cite{SNR17} to the context of
{\em transverse} contraction.

Consider $y_i\in Y_i$. 
Let: $\tilde{e}_{y_i}(t)=|\xi_{y_i}(\theta_{y_i}(t))-\tilde{x}_i(t)|$ for $t\in[0,h]$.
Hence $\tilde{e}_{y_i}(0)=|y_i-\tilde{x}_i|\leq \delta_i$.
For the sake of simplicity, we will write $\tilde{x}_t$ instead of
$\tilde{x}_i(t)$, $y_{\theta t}$ instead of $\xi_{y_i}(\theta_{y_i}(t))$, $\dot{\theta}_t$ instead of
$\dot{ \theta}_{y_i}(t)$.

For all $t\in[0,h]$, we have by~\eqref{eq:mu0}:
\begin{equation}\label{eq:mu1}
\langle f(y_{\theta t})-f(\tilde{x}_t),y_{\theta t}-\tilde{x}_t\rangle\leq \Lambda_{i+1}|y_{\theta t}-\tilde{x}_t|^2
\end{equation}
since $y_{\theta t},\tilde{x}_t\in Z_{i+1}$.
So for all $t\in[0,h]$, $\tilde{e}_{y_i}(t)=|y_{\theta t}-\tilde{x}_t|$ satisfies:\\
\\
$\frac{1}{2}\frac{d}{dt}(\tilde{e}_{y_i}^2(t))$
$=\langle \dot{\theta}_tf(y_{\theta t})-f(\tilde{x}_i),y_{\theta t}-\tilde{x}_t\rangle$\\

$=\dot{\theta}_t\langle f(y_{\theta t})-f(\tilde{x}_t),y_{\theta t}-\tilde{x}_t\rangle$\\
\hspace*{\fill}$+\langle \dot{\theta}_tf(\tilde{x}_t)-f(\tilde{x}_i),y_{\theta t}-\tilde{x}_t\rangle$\\

$=\dot{\theta}_t\langle f(y_{\theta t})-f(\tilde{x}_t),y_{\theta t}-\tilde{x}_t\rangle$
$+\langle -f(\tilde{x}_i),y_{\theta t}-\tilde{x}_t\rangle$\\
%
\hspace*{\fill}(using the fact
$f(\tilde{x}_t)$ $\bot$ $(y_{\theta t}-\tilde{x}_t)$)\\

$\leq \dot{\theta}_t\Lambda_{i+1} \tilde{e}_{y_i}^2(t)$ 
$+\langle -f(\tilde{x}_i), y_{\theta t}-\tilde{x}_t\rangle$
%
\hspace*{\fill}(using \eqref{eq:mu1})\\ 
%

%

$\leq\dot{\theta}_t\Lambda_{i+1}\tilde{e}_{y_i}^2(t)$
$+L|f(\tilde{x}_i)|t\tilde{e}_{y_i}(t)$\\
\hspace*{\fill}(because, using again 
$f(\tilde{x}_t)$ $\bot$ $(y_{\theta t}-\tilde{x}_t)$:

\hspace*{\fill}$|\langle -f(\tilde{x}_i), y_{\theta t}-\tilde{x}_t\rangle|=|\langle f(\tilde{x}_t)-f(\tilde{x}_i), y_{\theta t}-\tilde{x}_t\rangle|$

\hspace*{\fill}$\leq L|\tilde{x}_t-\tilde{x}_i||y_{\theta t}-\tilde{x}_t|
\leq Lt|f(\tilde{x}_i)| \tilde{e}_{y_i}(t)$\\
Hence for all $t\in[0,h]$:
\begin{equation}\label{eq:key}
\frac{1}{2}\frac{d}{dt}(\tilde{e}_{y_i}^2(t))\leq \rho_{i+1}\Lambda_{i+1} \tilde{e}_{y_i}^2(t)+L|f(\tilde{x}_i)|t\tilde{e}_{y_i}(t)
\end{equation}
(with $\rho_{i+1}=a_{i+1}$ if $\Lambda_{i+1}<-\gamma$, $\rho_{i+1}=b_{i+1}$ otherwise).\\
%
%
By subtracting $L|f(\tilde{x}_i)|t\tilde{e}_{y_i}(t)$ from $\rho_{i+1}\Lambda_{i+1}\tilde{e}_{y_i}^2(t)$
in the right-hand side of \eqref{eq:key}, we get
using \eqref{eq:minsigma} 

If 
$h\leq \frac{\gamma a_{i+1}}{2L|f(\tilde{x}_i)|} \tilde{e}_{y_i}(t)$ for~all~$t\in(0,h]$, then

\hspace*{10mm} $\frac{1}{2}\frac{d}{dt}(\tilde{e}_{y_i}^2(t))\leq  \sigma_{i+1}\tilde{e}_{y_i}^2(t)$.\\
%
%
It then follows by integration 
that, for all 
$i\in\mathbb{N},y_i\in Y_i$: \\
%
If 
$h\leq \frac{\gamma a_{i+1}}{2L|f(\tilde{x}_i)|} \tilde{e}_{y_i}(s)$
for all $s\in[0,h]$, then,
assuming $\tilde{e}_{y_i}\leq \delta_i$:
%
\begin{equation}\label{eq:key20}
\tilde{e}_{y_i}(s)\leq \delta_{i}e^{\sigma_{i+1} s}=\delta_i(s) \mbox{ for all $s\in[0,h]$}. 
\end{equation}
Suppose that the hypothesis 
$h\leq \frac{\gamma a_{i+1}}{2L|f(\tilde{x}_i)|} \tilde{e}_{y_i}(s)$ is not satisfied
for some $s_0\in[0,h]$, i.e:
$\tilde{e}_{y_i}(s_0)<Ch$
where $C=\frac{2L|f(\tilde{x}_i)|}{\gamma a_{i+1}}$.
It then follows for all $s\in[0,h]$:\\
$\tilde{e}_{y_i}(s)<Ch+|\tilde{e}_{y_i}(s_0)-e_{y_i}(s)|$

$\leq Ch+|\xi_{y_i}(\theta_{y_i}(s))-\xi_{y_i}(\theta_{y_i}(s_0))|
+|\tilde{x}_i(s)-\tilde{x}_i(s_0)|$

$\leq Ch+hb_{i+1}M_f+h\tilde{M}_{i+1}$.\\
Using \eqref{eq:key20}, we have then for all $s\in[0,h]$:

$\tilde{e}_{y_i}(s)\leq\max(\delta_i(s),h\tilde{M}_{i+1}(\frac{2L}{\gamma a_{i+1}}+1)+b_{i+1}M_{f})$,
i.e.~\eqref{eq:key2}. 

By induction on~$i$,
using the fact that $\tilde{e}_{y_0}=|y_0-\tilde{x}_0|\leq \delta_0$
(since $y_0\in Y_0\subseteq B(\tilde{x}_0,\delta_0)$),
we prove \eqref{eq:max0}.
\end{proof}
\begin{definition}
For all $i\in\mathbb{N}$, let  
$${\cal C}_{i+1}=\bigcup_{s\in[0,h]}B(\tilde{x}_{i}(s),\delta_{i}(s))\cap S_{i}(s).$$
Besides
$${\cal C}=\bigcup_{i=1}^{N_1}{\cal C}_i,$$
where $N_1=\lceil\frac{\tilde{R}_1}{h}\rceil$ and
$\tilde{R}_1\in\mathbb{R}_{>0}$ is the 1st return time
of $\tilde{x}(t)$ to $S_0$.
\end{definition}
\noindent
%
Using Proposition~\ref{prop:basicn}, we 
now show that a simple test of set inclusion 
(see \eqref{eq:new})
guarantees
the existence of a limit cycle.
\begin{theorem}\label{lemma:new1}
Consider an Euler trajectory $\tilde{\Gamma}$
and $\delta_0\in \mathbb{R}_{>0}$. Suppose
that, for all $i\in[N_1],s\in[0,h]$:
\begin{equation}\label{eq:h}
\delta_{i-1}(s)\geq h(\tilde{M}_i(\frac{2L}{\gamma a_{i}}+1)+b_iM_{f}),
\end{equation}
and
\begin{equation}\label{eq:new}
{\cal C}_{N_1}\cap S_0\subseteq Y_0.
\end{equation} 
Suppose furthermore that there exists $\eta\in\mathbb{R}_{>0}$ such that
\eqref{eq:eta} holds.
Then
\begin{itemize}
\item ${\cal C}$  is forward invariant w.r.t. $Y_0$, i.e.: For all $t\in[0,\infty),y\in Y_0$, 
we have $\xi_{y}(t)\in{\cal C}$, and
\item ${\cal C}$ contains a limit cycle $\Gamma$.
\end{itemize}
\end{theorem}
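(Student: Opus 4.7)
The plan is to combine the per-step error bound of Proposition~\ref{prop:basicn} with the set-inclusion hypothesis \eqref{eq:new} to trap every trajectory starting from $Y_0$ inside ${\cal C}$ on every return to $S_0$, and then to extract a periodic orbit by applying Brouwer's fixed-point theorem to the Poincar\'e first-return map on $Y_0$.

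First, I would fix $y_0 \in Y_0$ and prove by induction on $i \in [N_1]$ that $\tilde{e}_{y_{i-1}}(s) \leq \delta_{i-1}(s)$ for all $s \in [0,h]$. The base case $\tilde{e}_{y_0} \leq \delta_0$ holds by definition of $Y_0$, and the inductive step follows immediately from Proposition~\ref{prop:basicn}: hypothesis \eqref{eq:h} forces the maximum on the right-hand side of \eqref{eq:max0} to collapse to $\delta_{i-1}(s)$, leaving the desired bound. Geometrically this means $\xi_{y_0}(\Theta_{y_0}(t)) \in B(\tilde{x}_{i-1}(s), \delta_{i-1}(s)) \cap S_{i-1}(s) \subseteq {\cal C}_i$ for each $t = (i-1)h + s$, so the whole first round $t \in [0, N_1 h]$ lives in ${\cal C}$. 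Because $S_0$ sits between $S_{N_1-1}$ and $S_{N_1}$, and the reparametrization is monotone ($\dot{\theta}_{y_{N_1-1}} \geq a_{N_1} > 0$), there is some $s^\star \in [0,h]$ at which $\xi_{y_0}(\Theta_{y_0}((N_1-1)h + s^\star)) \in S_0$; this first return point lies in ${\cal C}_{N_1} \cap S_0$, and hypothesis \eqref{eq:new} then places it back in $Y_0$, closing one loop.

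Iterating the argument, I would define the successive crossings $y_0^{(p)} := \xi_{y_0}(T_{y_0}(p)) \in Y_0$ and reapply the same per-round bound starting from $y_0^{(p)}$, which shows the trajectory remains in ${\cal C}$ between consecutive returns. Assumption \eqref{eq:eta} forces $T_{y_0}(p) \geq p\eta \to \infty$, so no accumulation of return times can occur and $\xi_y(t) \in {\cal C}$ holds for all $t \geq 0$ and $y \in Y_0$, which is forward invariance. For the limit cycle, I would consider the Poincar\'e map $P : Y_0 \to Y_0$ defined by $P(y_0) := \xi_{y_0}(T_{y_0}(1))$, which is well defined by the previous step and continuous by continuous dependence of $\xi_{y_0}$ on the initial data combined with the implicit-function-theorem smoothness of $T_{y_0}(1)$. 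Since $Y_0$ is the intersection of a closed ball with a hyperplane, it is a compact convex set homeomorphic to a closed Euclidean ball of dimension $n-1$. Brouwer's theorem then yields a fixed point $x^\star \in Y_0$ whose trajectory is periodic of period $T_{x^\star}(1)$, and forward invariance of ${\cal C}$ places this orbit inside ${\cal C}$; this is the desired $\Gamma$.

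The main obstacle I expect is the geometric bookkeeping at the first return: proving that the crossing of $S_0$ by $\xi_{y_0}(\Theta_{y_0}(\cdot))$ falls inside the precise slice ${\cal C}_{N_1} \cap S_0$ rather than in some earlier ${\cal C}_i$. This rests on the fact that the reparametrization $\Theta_{y_0}$ synchronizes the exact solution with the Euler trajectory hyperplane by hyperplane, and on monotonicity $\dot{\theta}_{y_i} > 0$ guaranteeing that the two trajectories traverse $S_0$ at matching reparametrized times. A secondary subtlety is verifying that $P$ indeed maps $Y_0$ into $Y_0$ rather than merely into $S_0$, which is exactly what \eqref{eq:new} provides once the crossing point has been identified with an element of ${\cal C}_{N_1}$.
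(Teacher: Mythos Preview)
Your proposal is correct and follows essentially the same route as the paper: use Proposition~\ref{prop:basicn} together with hypothesis~\eqref{eq:h} to obtain $\tilde{e}_{y_0}(t)\leq\delta(t)$ over one round, invoke~\eqref{eq:new} to place the first return point back in $Y_0$, iterate with~\eqref{eq:eta} to get forward invariance, and apply Brouwer's fixed-point theorem to the Poincar\'e first-return map on $Y_0$ to extract $\Gamma\subset{\cal C}$. The only differences are cosmetic---you establish invariance before Brouwer while the paper does the reverse, and you make explicit a few points (compact convexity of $Y_0$, continuity of the return map, the crossing time $s^\star$) that the paper leaves implicit.
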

\begin{proof}
%
From \eqref{eq:max0} and \eqref{eq:h}, we deduce:
$e_{y_0}(t)\leq \delta(t)$ for all $y_0\in Y_0, t\in[0,N_1h]$.
In particular, at the time of the 1st return of $\tilde{x}(t)$
to $S_0$ at $t=\tilde{R}_1\in ((N_1-1)h,N_1h]$,
we have $e_{y_0}(\tilde{R}_1)\leq \delta(\tilde{R}_1)$.
So: $e_{y_0}(\tilde{R}_1)=|y_1-\tilde{x}_1|\leq \delta(\tilde{R}_1)$, where $\tilde{x}_1=\tilde{x}(\tilde{R}_1)$ and $y_1=\xi_{y_0}(T_{y_0}(1))$
are the 1st return points of $\tilde{x}(t)$ and $\xi_{y_0}(t)$
to $S_0$ respectively.

From the definition of ${\cal C}_{N_1}$ we can deduce that
$y_1\in{\cal C}_{N_1}\cap S_0$. Therefore, using \eqref{eq:new}, we have: $y_1\in Y_0$.
Thus for all $y\in Y_0$, we have
$y'=\xi_{y_0}(T_{y}(1))\in Y_0$.
This shows by the Brouwer fixed-point theorem applied to the continuous function: $y\in Y_0\mapsto \xi_{y}(T_y(1))$, that there exists
$y^*\in {\cal C}_{N_1}\cap S_0\subseteq Y_0$ such that
$\xi_{y^*}(T_{y^*}(1))=y^*$.
%
We deduce that there exists a limit cycle $\Gamma$ (closed solution
de \eqref{eqn:dyn}) of period $T=T_{y^*}(1)$
which passes through $y^*\in{\cal C}_{N_1}\cap S_0\subseteq Y_0$.

Observe that for all $y_0\in Y_0$ and $t\in[0,T_{y_0}(1)]$, we have,
by \eqref{eq:max0} of Proposition~\ref{prop:basicn},
$e_{y_k}(s)=|\tilde{x_k}(s)-\xi(\theta_{y_k}(s))|\leq \delta_k(s)$,
hence $\xi(\theta_{y_k}(s))\in B(\tilde{x}_k(s),\delta_k(s))\cap S_k(s)\in
{\cal C}_k$, and

$\xi_{y_0}(t)\in \bigcup_{k=1}^{N_1} {\cal C}_k={\cal C}$ for
all $t\in[0,T_{y_0}(1)]$.\\
Since $y_1=\xi_{y_0}(T_{y_0}(1))\in{\cal C}_{N_1}\cap S_0\subseteq Y_0$,
we can repeat the reasoning starting from $y_1$, which gives
$\xi_{y_1}(t)\in{\cal C}$ for $t\in[0,T_{y_1}(1)]$,
hence $\xi_{y_0}(t)\in{\cal C}$ for $t\in[0,T_{y_0}(2)]$.
By induction on $p\in\mathbb{N}_{>0}$, we have more generally
\begin{equation}\label{eq:Tp}
\xi_{y_0}(t)\in{\cal C} \mbox{ for all $t\in[0,T_{y_0}(p)]$}.
\end{equation}
Now, using \eqref{eq:eta} and~\eqref{eq:new}, we have:
$T_{y_0}(p)\geq p\eta$ for all $p\in\mathbb{N}_{>0}$, hence
$T_{y_0}(p)\rightarrow\infty$ as $p\rightarrow\infty$. 
So we deduce from \eqref{eq:Tp}:
$\xi_{y_0}(t)\in{\cal C}$ for all $t\in[0,\infty)$.
In particular, the limit cycle $\Gamma$ which is the trajectory
$\xi_{y^*}(t)$ for $t\in[0,\infty)$, is included in~${\cal C}$.
\end{proof}
Note that a simple numerical sufficient condition for $\eqref{eq:new}$ is:
$|\tilde{x}(\tilde{R}_1)-\tilde{x}_0|+\delta(\tilde{R}_1)<\delta_0$. 
%
%
\begin{example}\label{ex:1}
Consider the 
Van der Pol system with $p=0.3$.
\begin{equation*}
\begin{cases}
du_1/dt=u_2\\
du_2/dt=p u_2 -p u_1^2u_2-u_1\\
\end{cases}
\end{equation*}
We use the software ORBITADOR 
to perform the simulations
(see \url{https://perso.telecom-paristech.fr/jjerray/orbitador/}).
Let $h=10^{-4}, \delta_0=0.1$,\\
$\tilde{x}_0=(1.8929; -0.5383),\gamma=0.015$.
We have \\
$L=1.516, M_f=2.22, \max_{i\in[N_1]}\tilde{M}_i=2.22$,
$\tilde{R}_1=6.314$,
$N_1=\lceil \frac{\tilde{R}_1}{h}\rceil=63140$,
and $\min_{k\in[N_1]}a_k\geq 0.9$, $\max_{k\in[N_1]}b_k\leq 1.1$. \\
Besides for all $t\in[0,N_1h]$: $\delta(t)\geq \delta(N_1h)=0.0642$.\\
Note that $\mu^\bot[J(\tilde{x}(t))]\in [-4.2,+1.2]$ for $t\in[0,\tilde{R}_1]$,
so the system is {\em not} uniformly contractive.\\
We check that \eqref{eq:h} holds, i.e, for all $i\in[N_1],s\in[0,h]$:

$\delta_{i-1}(s)\geq 0.0642\geq
h\max_{i\in[N_1]}(\tilde{M}_{i}(\frac{2L}{\gamma a_{i}}+1)+b_iM_{f})\approx 0.05$.\\
Besides, as checked by ORBITADOR, the inclusion
\eqref{eq:new} also holds. 
Hence by Theorem~\ref{lemma:new1}, there is a limit cycle $\Gamma$
contained in an invariant set ${\cal C}$.
The inclusion \eqref{eq:new} is highlighted in the following figures,
using the time-step $h=0.002$ (instead of $h=10^{-4}$)
in order to make the inclusion more visible.
On Figure~\ref{fig:v_cycle_0},
the initial ball $B(\tilde{x}_0,\delta_0)$ is
depicted in orange,
the invariant zone ${\cal C}$ in red,  
and 
the set $\bigcup_{s\in[0,h]}B(\tilde{x}_{N_1-1}(s),\delta_{N_1-1}(s))$ in green.
The sets 
${\cal C}_{N_1}\cap S_{N_1-1}$, $Y_0$ and ${\cal C}_{N_1}\cap S_{N_1}$ have the form
of straight line segments which are transverse to the 
Euler trajectory $\tilde{\Gamma}$.
On Figure~\ref{fig:v_cycle_0},
$\tilde{\Gamma}$ is depicted in black:
it starts at the middle $\tilde{x}_0$
of segment $Y_0$ (in blue),
and turns clockwise.
The segments ${\cal C}_{N_1}\cap S_{N_1-1}$, $Y_0$ and ${\cal C}_{N_1}\cap S_{N_1}$ 
appear successively from top to bottom 
in Figures~\ref{fig:v_cycle_0}
and~\ref{fig:v_cycle}.
(The bottom of Figure~\ref{fig:v_cycle_0}
and Figure~\ref{fig:v_cycle}
gives zoom views on ${\cal C}_{N_1}\cap S_{N_1-1},Y_0,{\cal C}_{N_1}\cap S_{N_1}$.)
We see \eqref{eq:new}:
${\cal C}_{N_1}\cap S_0\subseteq B(\tilde{x}_0,\delta_0)$,
where ${\cal C}_{N_1}$ is the part of the green zone delimited by 
${\cal C}_{N_1}\cap S_{N_1-1}$ (fuschia) and 
${\cal C}_{N_1}\cap S_{N_1}$ (cyan), 
$B(\tilde{x}_0,\delta_0)$ is the orange ball,
and~$S_0$ is the straight line extending $Y_0$ (blue).
%
\begin{figure}[h!]
\centering
\includegraphics[scale=0.5]{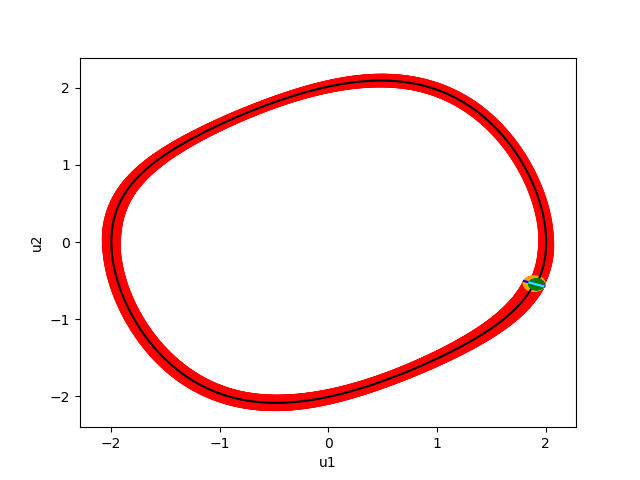}
\includegraphics[scale=0.5]{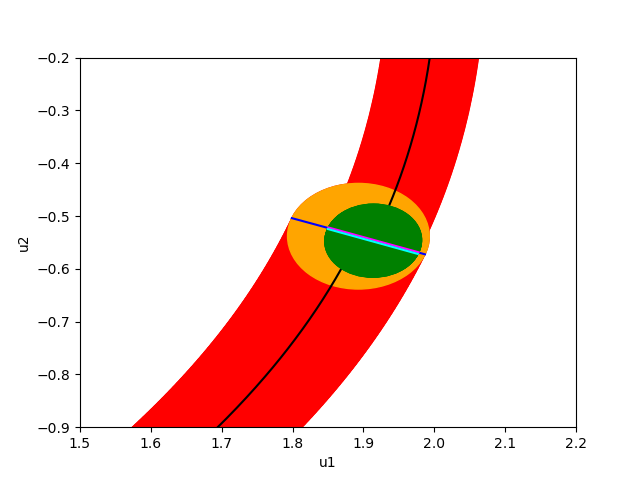}
\caption{Top: Euler trajectory $\tilde{\Gamma}$ (black); 
invariant zone ${\cal C}$  (red); $\bigcup_{s\in[0,h]}B(\tilde{x}_{N_1-1}(s),\delta_{N_1-1}(s))$
 (green); 
segments 
${\cal C}_{N_1}\cap S_{N_1-1}$
(fuschia),
$Y_0$ (blue),
and ${\cal C}_{N_1}\cap S_{N_1}$ (cyan).
Bottom: zoom showing \eqref{eq:new}:
${\cal C}_{N_1}\cap S_0\subseteq B(\tilde{x}_0,\delta_0)$,
where ${\cal C}_{N_1}$ is the green part delimited by ${\cal C}_{N_1}\cap S_{N_1-1}$ and 
${\cal C}_{N_1}\cap S_{N_1}$, and $B(\tilde{x}_0,\delta_0)$ is the orange ball.}
\label{fig:v_cycle_0}
\end{figure}
\begin{figure}[h!]
\centering
\includegraphics[scale=0.5]{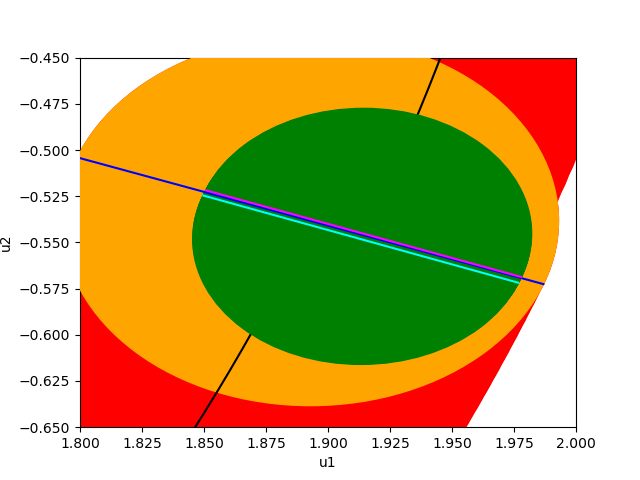}
\includegraphics[scale=0.5]{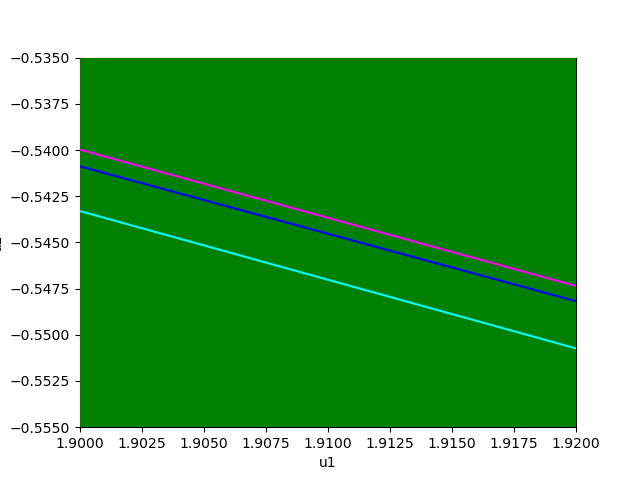}
\caption{Further zooms on ${\cal C}_{N_1}$ showing from top to bottom 
${\cal C}_{N_1}\cap S_{N_1-1}$
(fuschia), $Y_0$ (blue) 
and ${\cal C}_{N_1}\cap S_{N_1}$ 
(cyan). We see \eqref{eq:new}:
${\cal C}_{N_1}\cap S_0\subseteq B(\tilde{x}_0,\delta_0)$.
}
\label{fig:v_cycle}
\end{figure}
\end{example}
\section{Attractivity of the limit cycle}\label{sec:attraction}

We know assume that the existence of the limit cycle $\Gamma$ is known.
Let $y^*=\Gamma\cap S_0$ be the intersection of $\Gamma$
with $S_0$.

So far, the Euler approximation function $\tilde{x}(t)$ has been defined implicitly
with $\tilde{x}_0$ as initial condition. Let us now write
$\tilde{x}(t;z)$ to denote the Euler approximation function with $z$ as
initial condition, and let us write $\sigma_k(z)$ the corresponding
coefficient $\sigma_k$.
For $p\in\mathbb{N}_{>0}$ and $z\in Y_0$,
let also  $\tilde{R}_p(z)$ be
the time $t$ taken by $\tilde{x}(t;z)$ to make $p$ rounds,
and $N_p(z)=\lceil \frac{R_p(z)}{h}\rceil$.
Note that under the hypotheses of Theorem~\ref{lemma:new1}:
$\tilde{x}_p=\tilde{x}(\tilde{R}_p(\tilde{x}_0);\tilde{x}_0)
=\tilde{x}(\tilde{R}_1(\tilde{x}_{p-1});\tilde{x}_{p-1})\in Y_0$
for all $p\in\mathbb{N}_{>0}$.

Let us suppose that 
there exists $a,b\in \mathbb{R}_{>0}$ such that
\begin{equation}\label{eq:a}
0<a\leq a_i\leq b_i\leq b \mbox{ for all $i\in\mathbb{N}$}.
\end{equation}
This property is guaranteed if we know that all round performed by the
solution of \eqref{eqn:dyn} is completed in a known interval of time, i.e. if
there exists $T',T''\in\mathbb{R}_{>0}$ such that
\begin{equation}\label{eq:bornesup}
T'\leq T_1(y_0)\leq T'' \mbox{ for all $y_0\in Y_0$}.
\end{equation}
We will also use the assumption:
there exists $R'\in\mathbb{R}_{>0}$ such that
\begin{equation}\label{eq:bornesupbis}
\tilde{R}_1(z)\leq R' \mbox{ for all $z\in Y_0$}.
\end{equation}

\begin{theorem}\label{th:key1} 
Consider an Euler trajectory $\tilde{\Gamma}$
and $\delta_0\in \mathbb{R}_{>0}$. Suppose
that the system satisfies \eqref{eq:h} and \eqref{eq:new}.
Suppose furthermore that there exists $\eta,T',T'',R'\in\mathbb{R}_{>0}$ such that
\eqref{eq:eta}, \eqref{eq:bornesup} and~\eqref{eq:bornesupbis} hold.
Then
we have 
for all $t\in[0,\infty), y\in Y_0$:
\begin{equation}\label{eq:maxnew}
\tilde{e}_{y}(t)\leq \max(\delta(t), D h),
\end{equation}
where
$D=M_{{\cal C}}(\frac{2L}{\gamma a}+b+1)$ and
$M_{{\cal C}}$ is an upper bound on magnitude of $f$ over the elements of ${\cal C}$.
Besides if 
\begin{equation}\label{eq:limsup}
\lim_{t\rightarrow\infty}\delta(t)=0,
\end{equation}
then 
\begin{equation}\label{eq:limsup2}
\limsup_{t\rightarrow\infty}\tilde{e}_{y}(t)\leq Dh,
 \mbox{ for all $y\in Y_0$},
\end{equation}
\begin{equation}\label{eq:Gamma1}
\limsup_{t\rightarrow\infty} d(\tilde{x}(t),\Gamma) \leq D h
\end{equation}
(hence, the Euler trajectory $\tilde{x}(t;\tilde{x}_0)$ converges asymptotically to~$\Gamma$ as $h\rightarrow 0$)
and:
\begin{equation}\label{eq:Gamma2}
\lim_{t\rightarrow\infty} d(\xi_{y}(t),\Gamma)=0
\mbox{ for all $y\in Y_0$},
\end{equation}
i.e., $Y_0$ is a basin of attraction of $\Gamma$.\\
A sufficient condition for 
\eqref{eq:limsup} to hold is:

There exists
$d\in\mathbb{R}_{<0}$ such that
\begin{equation}\label{eq:newrelax}
h\Sigma_{k=1}^{N_1(z)-1}\sigma_k(z)
+s\sigma_{N_1(z)}(z) \leq d<0
\end{equation}
for all $z\in Y_0, s\in[0,h]$.
\end{theorem}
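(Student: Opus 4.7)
The theorem packages five conclusions, and I would tackle them in three stages: (i) establish the pointwise bound~\eqref{eq:maxnew}; (ii) derive the three asymptotic statements~\eqref{eq:limsup2}--\eqref{eq:Gamma2} from it under the hypothesis $\delta(t)\to 0$; and (iii) verify that the discrete criterion~\eqref{eq:newrelax} suffices to ensure~\eqref{eq:limsup}.

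\textbf{Stage 1: bound~\eqref{eq:maxnew}.} The plan is to iterate Proposition~\ref{prop:basicn} on every Euler sub-interval $[ih,(i+1)h]$. Theorem~\ref{lemma:new1} guarantees that $\tilde{x}_i(s)$ and $\xi_y(\Theta_y(t))$ live in $\cal C$ for every $y\in Y_0$, so $\tilde{M}_i$ and the instance of $M_f$ that appears in the proof of Proposition~\ref{prop:basicn} (through $|\xi_{y_i}(\theta(s))-\xi_{y_i}(\theta(s_0))|\le hb_{i+1}M_f$, taken along a trajectory in $\cal C$) are both majorised by $M_{{\cal C}}$. Combined with $a\le a_i$ and $b_i\le b$ from~\eqref{eq:a}, the round-local constant of Proposition~\ref{prop:basicn} is dominated by $Dh$ with $D=M_{{\cal C}}(\tfrac{2L}{\gamma a}+b+1)$, and induction on~$i$ (initialised by $\tilde{e}_{y_0}\le\delta_0$) propagates $\tilde{e}_y(t)\le\max(\delta(t),Dh)$ to all $t\ge 0$.

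\textbf{Stage 2: asymptotics.} Inequality~\eqref{eq:limsup2} is immediate from~\eqref{eq:maxnew} and $\delta(t)\to 0$. Since $y^*$ is a Poincaré fixed point, $\xi_{y^*}(\Theta_{y^*}(t))\in\Gamma$, so $d(\tilde{x}(t),\Gamma)\le\tilde{e}_{y^*}(t)$ and taking $\limsup$ yields~\eqref{eq:Gamma1}. Claim~\eqref{eq:Gamma2} is the main obstacle, because the naive triangle estimate $|\xi_y(\Theta_y(t))-\xi_{y^*}(\Theta_{y^*}(t))|\le\tilde{e}_y(t)+\tilde{e}_{y^*}(t)$ only yields an asymptotic bound of $2Dh$, not~$0$. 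To overcome this I would re-run the proof of Proposition~\ref{prop:basicn} with $\tilde{x}$ replaced by the exact orbit $\xi_{y^*}$, reparametrised through the same transverse sections $S_i$. Because $\xi_{y^*}$ is a genuine solution of~\eqref{eqn:dyn}, the residual linearisation term $L|f(\tilde{x}_i)|t\,\tilde{e}_{y_i}$ of~\eqref{eq:key} -- the sole source of the $Dh$ defect -- disappears (the orthogonality $\xi_y(\theta(t))-\xi_{y^*}(t)\perp f(\xi_{y^*}(t))$ now eliminates the analogue of $\langle -f(\tilde x_i),\cdot\rangle$ exactly). Gronwall then yields pure transverse contraction $|\xi_y(\Theta_y(t))-\xi_{y^*}(\Theta_{y^*}(t))|\le|y-y^*|\exp\!\big(\int_0^{t}\dot\theta\,\Lambda\,ds\big)$; under~\eqref{eq:newrelax} the exponent falls by at least $d<0$ per round, so the bound tends to~$0$, and~\eqref{eq:eta} together with the invariance of $\cal C$ forces $\Theta_y(t)\to\infty$, whence~\eqref{eq:Gamma2} after reparametrisation.

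\textbf{Stage 3 and main obstacle.} For \eqref{eq:newrelax}$\Rightarrow$\eqref{eq:limsup}, Remark~\ref{rk:delta} gives the closed form $\delta(ih+s)=\delta_0\exp\!\big(h\sum_{k=1}^{i}\sigma_k+s\sigma_{i+1}\big)$; since every return point $\tilde{x}_p$ lies in $Y_0$ by Theorem~\ref{lemma:new1}, \eqref{eq:newrelax} can be applied round by round with $z=\tilde{x}_{p-1}$, so each complete round contributes at most $d<0$ to the exponent, partial rounds are uniformly bounded, and after $p$ complete rounds the exponent is at most $pd+O(1)\to-\infty$, giving $\delta(t)\to 0$. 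The real technical difficulty is concentrated in Stage~2: transporting the transverse-contraction machinery from the Euler-exact pair $(\tilde x,\xi_y)$, where the $O(h)$ defect is irreducible, to the exact-exact pair $(\xi_{y^*},\xi_y)$ requires re-establishing the implicit-function synchronisation and uniform bounds on $\dot\theta$ and $\Lambda$ for two genuine solutions -- essentially rebuilding, in a narrower setting, the apparatus developed for Section~\ref{sec:approx}.
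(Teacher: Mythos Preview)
Your Stages~1 and~3 match the paper's proof closely: invoke the invariance of ${\cal C}$ from Theorem~\ref{lemma:new1} to replace $M_f,\tilde M_i$ by $M_{\cal C}$ and $a_i,b_i$ by the uniform bounds $a,b$ from~\eqref{eq:a} in Proposition~\ref{prop:basicn}, yielding~\eqref{eq:maxnew}; and for~\eqref{eq:newrelax}$\Rightarrow$\eqref{eq:limsup}, apply~\eqref{eq:newrelax} round by round at the successive return points $\tilde x_{p-1}\in Y_0$ to get $\delta(\tilde R_p)\le\delta_0 e^{pd}$, hence $\delta(t)\to 0$. Likewise~\eqref{eq:limsup2} and~\eqref{eq:Gamma1} are handled as you describe.

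For~\eqref{eq:Gamma2}, however, you take a substantially harder route than the paper. You correctly observe that the triangle inequality through $\tilde x(t)$ only yields $\limsup_{t\to\infty}d(\xi_y(\Theta_y(t)),\Gamma)\le 2Dh$, and you then propose to rebuild the entire transverse-synchronisation apparatus for the exact pair $(\xi_{y^*},\xi_y)$ so as to eliminate the $O(h)$ defect. The paper sidesteps this with a one-line observation: the quantity $d(\xi_y(t),\Gamma)$ is intrinsic to the flow and \emph{does not depend on $h$}, so once $\limsup\le 2Dh$ is established (and $\Theta_y(t)\to\infty$), the limit must be~$0$. Your approach would also work and is arguably more self-contained --- it does not implicitly let $h$ vary, and it makes the transverse contraction between exact orbits explicit --- but it requires re-deriving the implicit-function synchronisation, the bounds on $\dot\theta$, and a per-round decay estimate for exact trajectories (and translating~\eqref{eq:newrelax}, which is phrased in terms of the Euler-based $\sigma_k(z)$, into a statement about $\mu^\bot$ along $\Gamma$ itself). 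That is essentially redoing Section~\ref{sec:approx}, whereas the paper's $h$-independence trick costs nothing.
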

\begin{proof}
By Theorem \ref{lemma:new1}, ${\cal C}$ is an invariant, and $M_{{\cal C}}$ can be taken
in place of $M_f$ and $\tilde{M}_i$ for all $i\in\mathbb{N}$.
From \eqref{eq:max0} of Proposition~\ref{prop:basicn}
and \eqref{eq:a} (which is implied by \eqref{eq:bornesup}), we then deduce:
$\tilde{e}_{y}(t)\leq \max(\delta(t),
M_{{\cal C}}h(\frac{2L}{\gamma a}+b+1))=\max(\delta(t),Dh)$,
i.e. \eqref{eq:maxnew}.
So if $\lim_{t\rightarrow\infty}\delta(t)=0$, $\delta(t)$ becomes
after a sufficient long time $t\geq t_0$ less than $Dh$, and 
by \eqref{eq:maxnew}, we have: $\tilde{e}_y(t)\leq Dh$ for $t\geq t_0$,
i.e. \eqref{eq:limsup2}. In particular, since $y^*\in Y_0$, for $y=y^*$,
we have $\limsup_{t\rightarrow\infty}\tilde{e}_{y^*}(t)\leq Dh$,
i.e. $\limsup_{t\rightarrow\infty}|\tilde{x}(t)-\xi_{y^*}(\Theta_{y^*}(t))|\leq Dh$,
hence $\limsup_{t\rightarrow\infty} d(\tilde{x}(t),\Gamma)\leq D$,
i.e. \eqref{eq:Gamma1},
since $\xi_{y^*}(t)\in\Gamma$ for all $t$.
Besides we have: 

$d(\xi_y(\Theta_y(t)),\Gamma)$

$\leq d(\xi_y(\Theta_y(t)),\tilde{x}(t))+d(\tilde{x}(t),\Gamma)$

$\leq e_{y}(t)+d(\tilde{x}(t),\Gamma)$.\\
By taking the $\limsup$ of both sides of the inequality, we have:

$\limsup_{t\rightarrow\infty}d(\xi_y(\Theta_y(t)),\Gamma)$

$\leq \limsup_{t\rightarrow\infty} \tilde{e}_{y}(t)+\limsup_{t\rightarrow\infty}d(\tilde{x}(t),\Gamma)$,\\
hence, using \eqref{eq:limsup2} and \eqref{eq:Gamma1}:

$\limsup_{t\rightarrow\infty}d(\xi_y(\Theta_y(t)),\Gamma)
\leq 2 Dh$, \\
and
since $\Theta_y(t)\rightarrow\infty$ as $t\rightarrow\infty$:

$\limsup_{t\rightarrow\infty}d(\xi_y(t),\Gamma)
\leq 2 Dh$.\\
Now since $d(\xi_y(t),\Gamma)$ does not depend on $h$, we have:
$\lim_{t\rightarrow\infty}d(\xi_y(t),\Gamma)=0$, i.e.
\eqref{eq:Gamma2}.\\
Let us now prove that \eqref{eq:newrelax} implies \eqref{eq:limsup}.
We first show by induction on $p$
that for all $\tilde{x}_0\in Y_0$
\begin{equation}\label{eq:induction}
\delta(\tilde{R}_1(\tilde{x}_{p-1}))\leq \delta_0e^{pd} \mbox{ for all $p\in\mathbb{N}_{>0}$}.
\end{equation}
For the base case $p=1$, we have for some $s\in(0,h]$\\
$\delta(\tilde{R}_1)\leq\delta_0 e^{h\Sigma_{k=1}^{N_1-1}\sigma_k+s\sigma_{N_1}}\leq \delta_0 e^d$ \hspace*{\fill} (using \eqref{eq:newrelax}).

For $p\geq 2$, we have for some $s\in(0,h]$\\
$\delta(\tilde{R}_p)\leq\delta(R_{p-1}) e^{h\Sigma_{k=1}^{N_1(\tilde{x}_{p-1})-1}\sigma_k(\tilde{x}_{p-1})+s\sigma_{N_1(\tilde{x}_{p-1})}(\tilde{x}_{p-1})}$

$\leq \delta(R_{p-1}) e^d$ 
\hspace*{\fill}(using \eqref{eq:newrelax} and the fact that $\tilde{x}_{p-1}\in Y_0$)

$\leq e^{(p-1)d}e^d=e^{pd}$ \hspace*{\fill} (by induction hypothesis).

From \eqref{eq:induction} it then follows that
$\delta(t)\leq \delta_0 e^{pd+B}$ for $t\in[\tilde{R}_p,\tilde{R}_{p+1}]$,
where 

$B=h\max_{i\in[N_1(\tilde{x}_p)-1],s\in\{0,s\}}\Sigma_{k=1}^ih\sigma_k(\tilde{x}_p)+s\sigma_{i+1}(\tilde{x}_p)$.\\
Using the fact that $d<0$ and 
that under \eqref{eq:bornesupbis}, $p\rightarrow\infty$ as  $t\rightarrow\infty$,
we have:
$\lim_{t\rightarrow\infty}\delta(t)=0$, i.e. \eqref{eq:limsup}.
\end{proof}
\begin{remark}\label{rk:CSk}
Using a continuity argument, it can be seen that \eqref{eq:newrelax} 
holds for all $z\in Y_0, s\in[0,h]$, if
\begin{equation}\label{eq:newrelaxter}
\int_0^T\rho(t)\mu^\bot[J(\xi_{y^*}(t)]dt \leq 4d
\end{equation}
holds (with $\rho(t)=\frac{1}{2}$ if $\mu^\bot[J(\xi_{y^*}(t)]<\gamma$,
and $\rho(t)=\frac{3}{2}$ otherwise),
and if $h$ is sufficiently small, and
$\delta_0$ (hence $|\tilde{x}_0-y^*|$) is sufficiently small.
Note that  \eqref{eq:newrelaxter} is a strong form of \eqref{eq:periodic}:
$\int_0^T \mu^\bot[J(\xi_{y^*}(t))]dt \leq d<0$
(i.e., \eqref{eq:newrelaxter} implies~\eqref{eq:periodic}).
\end{remark}
As a recapitulation, it follows from Theorems~\ref{lemma:new1}
and~\ref{th:key1}:
\begin{itemize}
\item \eqref{eq:h}-\eqref{eq:new} coupled with \eqref{eq:eta}
forms a sufficient condition
that guarantees the existence of a limit cycle $\Gamma$,
and allows us to construct an invariant zone ${\cal C}$ around $\Gamma$; 
\item \eqref{eq:newrelax} (coupled with \eqref{eq:bornesup}-\eqref{eq:bornesupbis})
is an additional 
numeric condition that allows us to determine
a basin of attraction of~$\Gamma$.
\end{itemize}
Note that although \eqref{eq:newrelax} is a criterion that has to be {\em a priori} checked numerically,
it can be formally established using, e.g., interval arithmetic \cite{Moore1979}.
\begin{example}\label{ex:2}
Let us continue the Van der Pol of example~\ref{ex:1}
(with $h=10^{-4}$). 
We have $a=0.9, b=1.1$ and  
$D=M_{{\cal C}}h(\frac{2L}{\gamma a}+b+1)\approx 500$.
%
%
The numeric computation of 
\begin{equation*} 
{\cal K}(z,s): =h\Sigma_{k=1}^{N_1(z)-1}\sigma_k(z)
+s\sigma_{N_1(z)}(z)
\end{equation*}
for $z\in Y_0$, and $s\in[0,h]$ gives:
${\cal K}(z,s)\in(-0.34,-0.36)$, hence ${\cal K}(z,s)< d$
with $d=-0.34$.
Therefore \eqref{eq:newrelax} holds, hence \eqref{eq:limsup}.
It then follows from Theorem~\ref{th:key1} that 
\eqref{eq:limsup2} holds and
$Y_0=B(\tilde{x}_0,\delta_0)\cap S_0$ is a basin of attraction of~$\Gamma$.

Finally, we give a numerical evidence of \eqref{eq:limsup2}.
Figure~\ref{fig:w1-w2-cycle}
gives the evolution of 
$\tilde{e}_{y_0}(t)=|\xi_{y_0}(\Theta_{y_0}(t))-\tilde{x}(t)|$
for $y_0=(1.8037; -0.5057)\in Y_0$ and different values of time-step $h$
(The exact solution $\xi_{y_0}(t)$ is here
approximated as $\tilde{x}(t;y_0)$ with time-step $10^{-6}$).
The different curves  $\tilde{e}_{y_0}(t)$ are given from top to bottom
for  
$h_1=5\ 10^{-4}$ (red), $h_2=2.5\ 10^{-4}$ (green),
$h_3=1.25\ 10^{-4}$~(blue). The curves are in agreement with~\eqref{eq:limsup2}:
$\limsup_{t\rightarrow\infty}\tilde{e}_{y_0}(t)\leq D h$.
%
%
\begin{figure}[h!]
\centering
\includegraphics[scale=0.5]{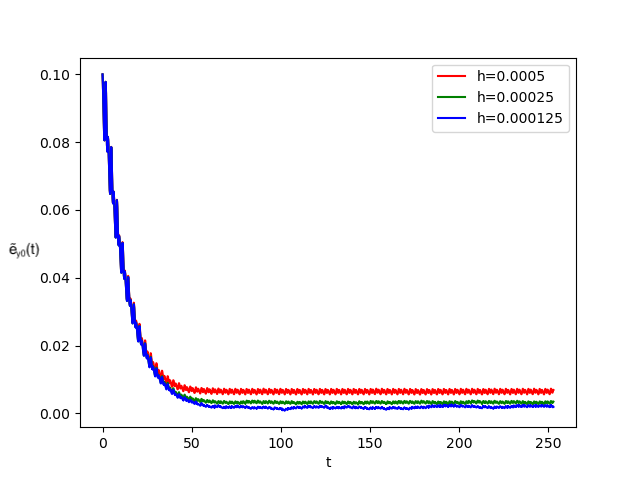}
\caption{Evolution of 
$\tilde{e}_{y_0}(t)$ 
with  $h_1=5\cdot 10^{-4}$ (red), $h_2=2.5\cdot 10^{-4}$ (green)
and $h_3=1.25\cdot 10^{-4}$ (blue),
showing agreement with
\eqref{eq:limsup2}:
$\limsup_{t\rightarrow\infty}\tilde{e}_{y_0}(t)\leq D h$.}
\label{fig:w1-w2-cycle}
\end{figure}
\end{example}\label{ex:2}
\section{Final Remarks}\label{sec:conclusion}
%

We have given an original set of conditions 
involving the Euler approximation of \eqref{eqn:dyn}
that allows us to prove the existence of a limit cycle $\Gamma$,
and determine an invariant set ${\cal C}$ around $\Gamma$ as well as
a basin of attraction of $\Gamma$. 

In the future, it would be interesting to apply this method 
together with non-Euclidean norms such as
the weighted norms of \cite{MaidensA15}. 
%
\newcommand{\LNCS}{LNCS}

\ifdefined\VersionAuthor
	\renewcommand*{\bibfont}{\small}
	\printbibliography[title={References}]
\else
	\bibliographystyle{IEEEtran} 
	\bibliography{acc21bis}
\fi
\end{document}